\renewcommand{\AA}{\mathcal{A}}
\newcommand{\BB}{\mathcal{B}}
\newcommand{\CC}{\mathcal{C}}
\newcommand{\prob}[1]{\mathbb{P}_{#1}}
\newcommand{\set}[1]{\{ #1 \}}
\newcommand{\val}[1]{\text{val}(#1)}
\newcommand{\merge}{\textrm{merge}}
\newcommand{\chck}{\textrm{check}}
\newcommand{\apply}{\textrm{apply}}
\newcommand{\finish}{\textrm{finish}}
\newcommand{\wait}{\textrm{wait}}
\title{Pushing undecidability of the isolation problem\\
for probabilistic automata}
\author{Nathana\"el Fijalkow \inst{1}, Hugo Gimbert \inst{2} 
\and Youssouf Oualhadj \inst{3}}
\institute{
LIAFA, CNRS \& Universit\'e Denis Diderot - Paris 7, France \\
\email{nath@liafa.jussieu.fr} \\
\and LaBRI, CNRS, France \\
\email{hugo.gimbert@labri.fr} \\
\and LaBRI, Universit\'e Bordeaux 1, France \\
\email{youssouf.oualhadj@labri.fr}}
\begin{document}

\maketitle

\begin{abstract}
This short note aims at proving that the isolation problem 
is undecidable for probabilistic automata with only one probabilistic transition.
This problem is known to be undecidable for general probabilistic automata, 
without restriction on the number of probabilistic transitions.
In this note, we develop a simulation technique that allows to simulate 
any probabilistic automaton with one having only one probabilistic transition.
\end{abstract}

\section{Introduction}

\noindent{\bf Probabilistic automata.}
Rabin introduced probabilistic automata over finite words 
as a natural and simple computation model~\cite{rabinsem}.
A probabilistic automaton can be thought as a non-deterministic automaton,
where non-deterministic transitions are chosen 
according to a fixed probabilistic distribution.
Probabilistic automata drew attention and have been extensively studied 
(see~\cite{bukharaev} for a survey).

\noindent{\bf The isolation problem.}
However, on the algorithmic side, most of the results are undecidability results.
The isolation problem asks, given some probability $0 \leq \lambda \leq 1$, 
whether there exists words accepted with probability arbitrarily close to $\lambda$.
Bertoni showed that this problem is undecidable~\cite{bertoni1,bertoni2}.

\noindent{\bf Contribution.}
In this note, we prove that the isolation problem is undecidable, 
even for probabilistic automata having only one probabilistic transition.
To do this, we develop a simulation technique that allows 
to simulate any probabilistic automaton
with one having only one probabilistic transition.

\noindent{\bf Outline.} 
Section 2 is devoted to definitions. In section 3, we develop a simulation technique,
which allows to simulate any probabilistic automaton 
with one having only one probabilistic transition.
Using this technique we show that the isolation problem is undecidable 
for this very restricted class of automata.

\section{Definitions}

Given a finite set of states $Q$, a probability distribution (distribution for short) over $Q$ 
is a row vector $\delta$ of size $|Q|$ with rational entries in $[0,1]$ 
such that $\sum_{q \in Q} \delta(q) = 1$.
We denote by $\delta_q$ the distribution such that 
$\delta_q(q') = 1$ if $q' = q$ and $0$ otherwise.
A probabilistic transition matrix $M$ is a square matrix of size $|Q| \times |Q|$,
such that for a state $s$, $M_a(s,\_) $ is a distribution over $Q$.

\begin{definition}[Probabilistic automaton]
A probabilistic automaton is a tuple $\AA = (Q, A, (M_a)_{a \in A}, q_0, F)$, 
where $Q$ is a finite set of states, $A$ is the finite input alphabet, 
$(M_a)_{a \in A}$ are the probabilistic transition matrices,
$q_0$ is the initial state and $F$ is the set of accepting states.
\end{definition}

For each letter $a \in A$, $M_a(s,t)$ is the probability to go from state $s$ to state $t$ 
when reading letter $a$.
A probabilistic transition is a couple $(s,a)$ such that $M_a (s,t) \notin \set{0,1}$ 
for some $t$.

A probabilistic automaton is said \textit{simple} if for all $a$, for all states $s$ and $t$,
we have $M_a(s,t) \in \set{0,\frac{1}{2},1}$.

Given an initial distribution $\delta$ and an input word $w$, we define $\delta \cdot w$ 
by induction on $w$: we have $\delta \cdot \varepsilon = \delta$,
then for a letter $a$ in $A$, we have $\delta \cdot a = M_a \cdot \delta$
and if $w = v \cdot a$, then $\delta \cdot (v \cdot a) = (\delta \cdot v) \cdot a$.

We denote by $\prob{\AA}(s \xrightarrow{w} T)$ the probability to reach the set $T$ 
from state $s$ when reading the word $w$, that is $\sum_{t \in T}(\delta_s \cdot w)(t)$.

\begin{definition}[Value and acceptance probability]
The \emph{acceptance probability} of a word $w \in A^*$ by $\AA$ is
$\prob{\AA}(w) = \prob{\AA}(q_0 \xrightarrow{w} F)$.
The \emph{value} of $\AA$, denoted $\val{\AA}$,
is the supremum acceptance probability: $\val{\AA} = \sup_{w \in A^*} \prob{\AA}(w)$.
\end{definition}

\section{Simulation with one probabilistic transition}

We first show how to simulate a probabilistic automaton with one having only one probabilistic transition, \textit{up to a regular language}:
\begin{proposition}
For any simple probabilistic automata $\AA = (Q,A,(M_a)_{a \in A}, q_0, F)$, 
there exists a simple probabilistic automaton $\BB$ over a new alphabet $B$, with one probabilistic transition,
and a morphism $\widehat{\_} : A^* \mapsto B^*$ such that:
$$\forall w \in A^*, \prob{\AA}(w) = \prob{\BB}(\widehat{w}).$$
\end{proposition}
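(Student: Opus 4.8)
The plan is to replace each probabilistic transition of $\AA$ by a detour through a single shared ``coin-flipping gadget'' in $\BB$, so that the only probabilistic transition of $\BB$ is the one lying inside that gadget. Since $\AA$ is simple, whenever we read a letter $a$ from a state $s$, the distribution $M_a(s,\_)$ is either deterministic (a Dirac distribution) or a fair coin flip between two target states $t_0,t_1$. We keep the deterministic transitions of $\AA$ essentially as they are, and we re-route the coin-flip transitions. Concretely, I would let $B$ contain, for each letter $a\in A$, a block of fresh letters that encode the phases of a coin flip: an ``enter'' letter that remembers the pending transition $(s,a)$, a single ``flip'' letter $c$ on which the unique probabilistic transition of $\BB$ happens, and ``exit'' letters that, depending on whether the flip landed on the ``heads-branch'' state or the ``tails-branch'' state, continue the simulation into the correct target $t_0$ or $t_1$. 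The morphism $\widehat{\_}$ then maps each $a\in A$ to the corresponding fixed block of $B$-letters (the ``enter'', then $c$, then a deterministic ``commit/exit'' sequence), so that $\widehat{\_}$ is a monoid morphism as required.

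The key steps, in order, are: (i) introduce in $\BB$ a copy of every state of $\AA$, plus, for the gadget, two auxiliary states $h$ (heads) and $\ell$ (tails) and whatever bookkeeping states are needed to remember which original transition $(s,a)$ is being simulated; (ii) define the single probabilistic transition: from the gadget's pre-flip state, reading the flip letter $c$, go to $h$ with probability $\tfrac12$ and to $\ell$ with probability $\tfrac12$ — this is the only entry of any $M_b$ not in $\{0,1\}$, so $\BB$ is simple and has exactly one probabilistic transition; (iii) wire the deterministic ``enter'' and ``exit'' letters so that reading $\widehat{a}$ from (the copy of) state $s$ deterministically routes through the gadget and deposits the token in $t_0$ if the flip was heads and in $t_1$ if it was tails, thereby reproducing exactly the distribution $M_a(s,\_)$; (iv) handle the deterministic letters of $\AA$ by giving them trivial one-letter blocks (or folding them into the same block structure for uniformity); (v) set the initial state of $\BB$ to the copy of $q_0$ and the accepting set of $\BB$ to the copies of $F$ (the auxiliary states $h,\ell$ and bookkeeping states being non-accepting, and chosen so that a word of the form $\widehat{w}$ always ends outside them). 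Then a straightforward induction on $|w|$ shows that after reading $\widehat{w}$ the distribution of $\BB$ over the copies of $Q$ equals $\delta_{q_0}\cdot w$, and restricting to $F$ gives $\prob{\BB}(\widehat w)=\prob{\AA}(w)$.

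The main obstacle — and the place where care is genuinely needed rather than routine — is ensuring that a \emph{single} probabilistic transition suffices even though $\AA$ may have many coin-flip transitions with different target pairs. The trick is that the gadget's probabilistic step is ``generic'': it always produces the same pair $\{h,\ell\}$, and the task of converting $\{h,\ell\}$ into the correct $\{t_0,t_1\}$ is pushed onto the \emph{deterministic} exit letters, which may depend on $a$ (and, via the bookkeeping states, on $s$). So different original transitions share the same random bit but interpret it differently downstream; this is exactly what lets us collapse to one probabilistic transition. One must also check that intermediate configurations reached \emph{strictly inside} a block $\widehat a$ never coincide with an accepting state, so that the equality $\prob{\AA}(w)=\prob{\BB}(\widehat w)$ holds on the nose for all $w$, not merely in the limit; this is why the gadget uses fresh auxiliary states disjoint from the copies of $Q$. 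Finally one verifies that $\widehat{\_}$, defined on letters and extended multiplicatively, is a well-defined morphism $A^*\to B^*$, which is immediate from the construction.
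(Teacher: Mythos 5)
Your high-level idea --- a single shared coin-flip gadget whose random bit is interpreted downstream by deterministic ``exit'' letters indexed by the transition being simulated --- is the same as the paper's, but as written the proposal has a genuine gap: it treats the configuration of $\AA$ as a single token sitting in one state, whereas after reading a prefix the configuration is a \emph{distribution} spread over many states, and the block $\widehat{a}$ must apply $M_a$ to all of them simultaneously. With one ``enter--flip--exit'' sub-block per letter of $A$ this cannot work: an enter letter indexed by $(s,a)$ only moves the mass currently at $s$, while an enter letter that moves \emph{all} the mass to the unique pre-flip state destroys the information of where each piece of mass came from, and that information cannot be recovered after the flip. Your ``bookkeeping states'' cannot repair this: any state reached between the enter letter and the flip that still remembers $s$ would force the flip letter to be probabilistic from several states, i.e.\ several probabilistic transitions in the sense of the paper. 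The paper's fix is to make $\widehat{a}$ a concatenation of $n=|Q|$ sub-blocks $\chck(a,q_i)\cdot * \cdot \apply(a,q_i)$, one per source state, each of which peels off only the mass sitting at $q_i$, sends it through the single gadget, and lets the \emph{letter} $\apply(a,q_i)$ (not a state) decide how the coin is reinterpreted.

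Processing the source states one after another inside a block creates a second problem you do not address: mass deposited at a target $t$ by an early sub-block must not be picked up again by the later sub-block $\chck(a,t)$ of the same $\widehat{a}$. The paper solves this with duplicated states $\bar q$ --- the exit letters deposit mass on the barred copies, which no $\chck(a,q)$ touches --- together with a final $\merge$ letter at the end of each block that folds $\bar q$ back to $q$. Without both ingredients (iterating over all source states inside $\widehat a$, and the barred copies plus merge) the claimed identity $\prob{\AA}(w)=\prob{\BB}(\widehat w)$ already fails on words of length one or two as soon as some transition randomizes.
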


The morphism $\widehat{\_}$ will not be onto, so this simulation works 
up to the regular language $\set{\widehat{w} \mid w \in A^*}$.
We shall see that the automaton $\BB$ will not be able to check that a word read belongs 
to this language, which makes this restriction unavoidable in this construction.

We first give the intuitions behind the construction.
Intuitively, while reading the word $w$, the probabilistic automaton $\AA$ 
``throw parallel threads''.
A computation of $\AA$ over $w$ can be viewed as a tree, 
where probabilistic transitions correspond to branching nodes.
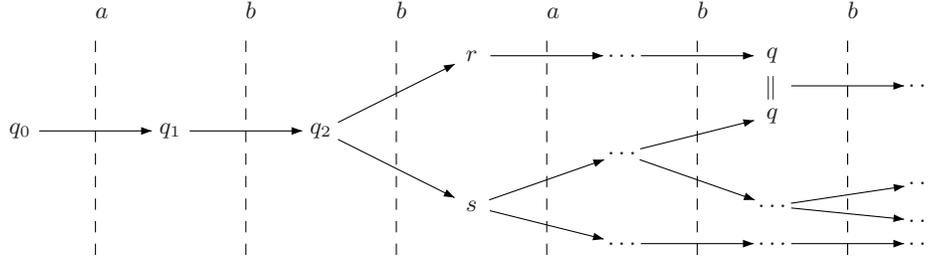
\begin{figure}
\begin{center}
\begin{picture}(120,30)(0,0)
	\gasset{Nw=5,Nh=5,Nframe=n}

	\put(10,30){$a$}
	\drawline[dash={1.5}0,AHnb=0](10,-2)(10,27)
	\put(30,30){$b$}
	\drawline[dash={1.5}0,AHnb=0](30,-2)(30,27)
	\put(50,30){$b$}
	\drawline[dash={1.5}0,AHnb=0](50,-2)(50,27)
	\put(70,30){$a$}
	\drawline[dash={1.5}0,AHnb=0](70,-2)(70,27)
	\put(90,30){$b$}
	\drawline[dash={1.5}0,AHnb=0](90,-2)(90,27)
	\put(110,30){$b$}
	\drawline[dash={1.5}0,AHnb=0](110,-2)(110,27)

  	\node(1)(0,15){$q_0$}
  	\node(2)(20,15){$q_1$}
  	\node(3)(40,15){$q_2$}
  	\node(4)(60,25){$r$}
  	\node(5)(60,5){$s$}
  	\node(6)(80,25){$\ldots$}
  	\node(7)(80,12){$\ldots$}
  	\node(8)(80,0){$\ldots$}
  	\node(9)(100,25){$q$}
  	\node(9-10)(100,21){}
  	\node(10)(100,17){$q$}
  	\node(11)(100,5){$\ldots$}
  	\node(12)(100,0){$\ldots$}
  	\node(13)(120,0){$\ldots$}
  	\node(14)(120,3){$\ldots$}
  	\node(15)(120,8){$\ldots$}
  	\node(16)(120,21){$\ldots$}

  	\drawedge(1,2){}
  	\drawedge(2,3){}
  	\drawedge(3,4){}
  	\drawedge(3,5){}
  	\drawedge(4,6){}
  	\drawedge(5,7){}
  	\drawedge(5,8){}
  	\drawedge(6,9){}
  	\drawedge(7,10){}
  	\drawedge(7,11){}
  	\drawedge(8,12){}
  	\drawedge(9-10,16){}
  	\drawedge(11,15){}
  	\drawedge(11,14){}
  	\drawedge(12,13){}

	\put(99,20){$\parallel$}
\end{picture}
\end{center}
\label{ex_computation}
\caption{An example of a computation}
\end{figure}

On the figure, reading $a$ from $q_0$ or $b$ from $q_1$ leads deterministically to the next state.
Reading $b$ from $q_2$ leads at random to $r$ or to $s$, hence the corresponding node is branching. Our interpretation is that two parallel threads are thrown.
Let us make two observations:
\begin{itemize}
	\item threads are not synchronised: reading the fourth letter (an $a$), the first thread leads deterministically to the next state, while the second thread randomizes;
	\item threads are merged so there are at most $n = |Q|$ parallel threads: whenever two threads synchronize to the same state $q$, they are merged.
	This happens in the figure after reading the fifth letter ($b$).
\end{itemize}

The automaton $\BB$ we construct will simulate the $n$ threads from the beginning, 
and take care of the merging process each step.

\begin{proof}
We denote by $q_i$ the states of $\AA$, \textit{i.e} $Q = \set{q_0, \ldots, q_{n-1}}$.
The alphabet $B$ is made of two new letters `$*$' and `$\merge$' plus, 
for each letter $a \in A$ and state $q \in Q$,
two new letters $\chck(a,q)$ and $\apply(a,q)$, so that:
$$B = \set{*, \merge} \cup \bigcup_{a \in A, q \in Q} \set{\chck(a,q), \apply(a,q)}$$

We now define the automaton $\BB$.
We duplicate each state $q \in Q$, and denote the fresh copy by $\bar{q}$.
Intuitively, $\bar{q}$ is a temporary state that will be merged at the next merging process.
States in $\BB$ are either a state from $Q$ or its copy, or one of the three fresh states $s_*$, $s_0$ and $s_1$.

The initial state remains $q_0$ as well as the set of final states remains $F$.

The transitions of $\BB$ are as follows:
\begin{itemize}
	\item for every letter $a \in A$ and state $q \in Q$, 
the new letter $\chck(a,q)$ from state $q$ leads deterministically to state $s_*$ \textit{i.e} $M_{\chck(a,q)}(q) = s_*$,
	\item the new letter $*$ from state $s_*$ leads with probability half to $s_0$ and half to $s_1$, \textit{i.e} $M_{s_*}(*) = \frac{1}{2} s_0 + \frac{1}{2} s_1$ (this is the only probabilistic transition of $\BB$);
	\item the new letter $\apply(a,q)$ from states $s_0$ and $s_1$ applies the transition function from $q$ reading $a$: 
if the transition $M_a(q)$ is deterministic, \textit{i.e} $M_a(q,r) = 1$ for some state $r$ then $M_{\apply(a,q)}(s_0) = \bar{r}$ and $M_{\apply(a,q)}(s_1) = \bar{r}$,
else the transition $M_a(q)$ is probabilistic \textit{i.e}
$M_a(q) = \frac{1}{2} r + \frac{1}{2} r'$ for some states $r,r'$, 
then $M_{\apply(a,q)}(s_0) = \bar{r}$ and $M_{\apply(a,q)}(s_1) = \bar{r'}$;
	\item the new letter $\merge$ activates the merging process: it consists in replacing $\bar{q}$ by $q$ for all $q \in Q$.
\end{itemize}
Whenever a couple (letter, state) does not fall in the previous cases, it has no effect.
The gadget simulating a transition is illustrated in the figure.

\begin{figure}
\begin{center}
\begin{picture}(60,20)(0,0)
	\gasset{Nw=5,Nh=5,Nframe=n}

  	\node(q)(0,10){$q$}
  	\node(s)(20,10){$s_*$}
  	\node(s0)(40,20){$s_0$}
  	\node(s1)(40,0){$s_1$}
  	\node(r0)(60,20){$\bar{r_0}$}
  	\node(r1)(60,0){$\bar{r_1}$}

  	\drawedge(q,s){$\chck(a,q)$}
  	\drawedge(s,s0){$*$}
  	\drawedge(s,s1){$*$}
  	\drawedge(s0,r0){$\apply(a,q)$}
  	\drawedge(s1,r1){$\apply(a,q)$}
\label{fig:gadget1}
\end{picture}
\end{center}
\end{figure}
Now we define the morphism $\widehat{\_} : A^* \mapsto B^*$ by its action on letters:
$$\widehat{a} =
\chck(a,q_0) \cdot * \cdot \apply(a,q_0) \ldots \chck(a,q_{n-1}) \cdot * \cdot \apply(a,q_{n-1}) \cdot \merge.$$

The computation of $\AA$ while reading $w$ in $A^*$
is simulated by $\BB$ on $\widehat{w}$, \textit{i.e} we have:
$$\prob{\AA}(w) = \prob{\BB}(\widehat{w})$$

This completes the proof.\hfill\qed
\end{proof}

Let us remark that $\BB$ is indeed unable to check that a letter $\chck(a,q)$ 
is actually followed by the corresponding $\apply(a,q)$: 
inbetween, it will go through $s_*$ and ``forget'' the state it was in.

%

We now improve the above construction: we get rid of the regular external condition.
To this end, we will use probabilistic automata whose transitions have probabilities $0$, $\frac{1}{3}$, $\frac{2}{3}$ or $1$.
This is no restriction, as stated in the following lemma:

\begin{lemma}
For any simple probabilistic automata $\AA = (Q,A,(M_a)_{a \in A}, q_0, F)$, 
there exists a probabilistic automaton $\BB$ whose transitions have probabilities $0$, $\frac{1}{3}$, $\frac{2}{3}$ or $1$, such that for all $w$ in $A^*$, we have:
$$\val{\AA} = \val{\BB}.$$
\end{lemma}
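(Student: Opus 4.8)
The plan is to replace every fair coin of $\AA$ by a von Neumann rejection gadget built from $\{\tfrac13,\tfrac23\}$-coins. The crucial point is that we only need to preserve the \emph{value}, i.e.\ a supremum over words, and not the acceptance probability of individual words; so the gadget is allowed to simulate a fair coin only approximately, as long as the input word can push the approximation error to $0$.

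Concretely, I would let $\BB$ keep the states of $\AA$ together with the copies $\bar q$ used in the proof of the Proposition, an absorbing rejecting sink $\bot\notin F$, and, for each $a\in A$ and each $q$ with $M_a(q)=\tfrac12\delta_r+\tfrac12\delta_{r'}$, three fresh gadget states $g_{a,q},h_{a,q},t_{a,q}\notin F$. Over the alphabet $\set{a^{\mathrm{start}}\mid a\in A}\cup\set{\tau,\merge}$, the letter $a^{\mathrm{start}}$ sends each $q$ with a deterministic $M_a$-transition to $\bar r$ and each $q$ with a fair $M_a$-transition to $g_{a,q}$; the letter $\tau$, which carries the only probabilistic transitions, performs one biased flip in every gadget, namely $M_\tau(g_{a,q})=\tfrac13 h_{a,q}+\tfrac23 t_{a,q}$, $M_\tau(h_{a,q})=\tfrac13 g_{a,q}+\tfrac23\bar r$ and $M_\tau(t_{a,q})=\tfrac13\bar{r'}+\tfrac23 g_{a,q}$, so that a pair $\tau\tau$ outputs $\bar r$ or $\bar{r'}$ with the same probability $\tfrac29$ and otherwise (probability $\tfrac59$) returns to $g_{a,q}$; and $\merge$ relabels every $\bar q$ into $q$. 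Applying $a^{\mathrm{start}}$ or $\merge$ to a copy state or a gadget state leads to $\bot$, and any remaining (letter, state) pair has no effect. To a word $w=a_1\cdots a_m\in A^*$ I then associate $\widehat w=\prod_{i=1}^m a_i^{\mathrm{start}}\,\tau^{2k}\,\merge$, for a parameter $k\in\NN$ to be chosen large.

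For $\val\BB\ge\val\AA$, fix $w$ and $\varepsilon>0$: reading one block $a_i^{\mathrm{start}}\,\tau^{2k}\,\merge$ from a distribution supported on the states of $\AA$ reproduces exactly the effect of reading $a_i$ in $\AA$, up to a nonnegative deficit of total mass at most $(\tfrac59)^k$ that has been routed to $\bot$ (the chance that some thread is still inside a gadget after $k$ rejection rounds); telescoping over the $m$ blocks gives $\prob\BB(\widehat w)\ge\prob\AA(w)-m(\tfrac59)^k>\prob\AA(w)-\varepsilon$ once $k$ is large enough, hence $\val\BB\ge\prob\AA(w)$ for every $w$, i.e.\ $\val\BB\ge\val\AA$. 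For the converse $\val\BB\le\val\AA$, given any $u\in B^*$ one observes that the surviving (non-$\bot$) part of a run of $\BB$ on $u$ follows the block pattern of some word $w\in A^*$, possibly with an unfinished last block whose threads then sit in non-accepting gadget or copy states; since each completed block transforms the sub-distribution carried on $\AA$-states by the corresponding $\AA$-transition followed by the merge relabelling, up to a nonnegative deficit, an induction shows this sub-distribution is coordinatewise dominated by the distribution of $\AA$ after $w$, whence $\prob\BB(u)\le\prob\AA(w)\le\val\AA$. I expect this converse to be the main obstacle: one must arrange the ``illegal continuation goes to $\bot$'' rule carefully enough that every deviation of $u$ from the intended block structure can only move probability into the rejecting sink and never manufacture acceptance mass that is unavailable to $\AA$, and then carry the coordinatewise domination through the gadget; by contrast the forward inequality is just the geometric estimate $(\tfrac59)^k\to0$.
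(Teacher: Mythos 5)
Your construction is, at its core, exactly the paper's: the von Neumann rejection gadget that makes two sequential $\frac{1}{3}$-vs-$\frac{2}{3}$ picks, outputs each branch with probability $\frac{2}{9}$, restarts with probability $\frac{5}{9}$, and is iterated by a repeated fresh letter so that the supremum over the number of iterations recovers the exact fair coin. Your version just packages the gadget inside the check/merge block structure with a $\bot$ sink and spells out the coordinatewise-domination argument for $\val{\BB}\le\val{\AA}$, which the paper dismisses with ``we easily see''; the proof is correct.
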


\begin{proof}
We provide a construction to pick with probability half, using transitions with probability
$0$, $\frac{1}{3}$, $\frac{2}{3}$ and $1$.
The construction is illustrated in the figure.
\begin{figure}
\begin{center}
\begin{picture}(60,33)(0,0)
	\gasset{Nadjust=wh,Nframe=n}

  	\node(q)(0,15){$q$}
  	\node(s0)(20,30){$s_0$}
  	\node(s1)(20,0){$s_1$}
  	\node(r0)(40,30){$r_0$}
  	\node(r1)(40,0){$r_1$}

   	\drawbpedge(q,0,-10,s0,180,10){$\frac{1}{3}$}
   	\drawbpedge(q,0,-10,s1,180,10){$\frac{2}{3}$}
  	\drawbpedge(s0,-10,10,q,-10,10){$\frac{1}{3}$}
  	\drawbpedge(s1,15,10,q,15,10){$\frac{2}{3}$}
  	\drawedge(s0,r0){$\frac{2}{3}$}
  	\drawedge(s1,r1){$\frac{1}{3}$}
  	\drawloop[loopangle=-90](r0){}
  	\drawloop(r1){}
\end{picture}
\end{center}
\end{figure}

In this gadget, the only letter read is a fresh new letter $\sharp$.
The idea is the following: to pick with probability half $r_0$ or $r_1$, 
we sequentially pick with probability a third or two thirds.
Whenever the two picks are different, if the first was a third, 
then choose $r_0$, else choose $r_1$. 
This happens with probability half each.
We easily see that $\prob{\AA}(a_0 \cdot a_1 \cdot \ldots a_{k-1}) 
= \sup_p \prob{\BB}(a_0 \cdot \sharp^p \cdot a_1 \cdot \sharp^p \ldots a_{k-1} \cdot 
\sharp^p)$.
\hfill\qed
\end{proof}

\begin{proposition}
For any simple probabilistic automata $\AA = (Q,A,(M_a)_{a \in A}, q_0, F)$, 
there exists a simple probabilistic automaton $\BB$ over a new alphabet $B$, 
with one probabilistic transition, such that:
$$\val{\AA} \geq \lambda \Leftrightarrow \val{\BB} \geq \lambda.$$
\end{proposition}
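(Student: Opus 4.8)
The plan is to combine the two earlier results into a single construction that is immune to the ``wrong word'' problem. Recall that the first Proposition gives a simple automaton $\BB_1$ with one probabilistic transition and a morphism $\widehat{\_}$ such that $\prob{\AA}(w) = \prob{\BB_1}(\widehat w)$ for every $w \in A^*$; the only defect is that $\BB_1$ may accept with high probability words of $B^*$ that are \emph{not} in the image of $\widehat{\_}$. The Lemma, in turn, shows that allowing probabilities $\frac13, \frac23$ (together with $0,1$) costs nothing on the level of the value, because one can pick with probability $\frac12$ by iterating $\sharp$ sufficiently many times; moreover the equality there is $\val{\AA} = \sup_p \prob{\BB}(\ldots \sharp^p \ldots)$, so a verifier that silently loops can be simulated.

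First I would run the construction of the first Proposition to obtain $\BB_1$ over $B$ with its unique probabilistic transition, and fix the regular language $R = \set{\widehat w \mid w \in A^*} \subseteq B^*$. The key new idea is to add, in parallel to the simulation carried out by $\BB_1$, a deterministic ``syntax checker'' component $\CC$ that reads $B^*$ and tracks whether the prefix read so far is a prefix of some word in $R$. Since $R$ is generated by a morphism with all $\widehat a$ of the same fixed shape, $R$ is regular and prefix-checkable by a DFA; I take the product of $\BB_1$ with this DFA, keeping as accepting states only those pairs $(f, c)$ with $f \in F$ and $c$ an accepting state of the DFA (i.e.\ the word read is exactly in $R$). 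Call this product $\BB_2$: it is still simple with one probabilistic transition, and now $\prob{\BB_2}(u) = \prob{\AA}(w)$ if $u = \widehat w$, and $\prob{\BB_2}(u) = 0$ otherwise, whence $\val{\BB_2} = \val{\AA}$.

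The obstacle is that a \emph{deterministic} syntax checker reading the whole word does not actually restrict anything for the value: the supremum over $u \in B^*$ of $\prob{\BB_2}(u)$ equals the supremum over $u \in R$, which is $\sup_w \prob{\AA}(w) = \val{\AA}$ — so in fact this step already works, provided the product construction genuinely yields $0$ on words outside $R$, and the point to check carefully is that the ``no effect'' clauses of $\BB_1$ together with the DFA sink state indeed force acceptance probability $0$ off $R$. This is where I must be careful: in $\BB_1$ a badly-placed letter merely ``has no effect'' rather than killing the run, so the DFA component must be the one doing the killing, and I need to verify that once the DFA leaves the set of live prefixes it never returns, which holds because $R$ is a code-like language $(\widehat a)^*$-style with a clean synchronizing $\merge$ marker. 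Finally, since $\BB_2$ uses only probabilities $0, \frac12, 1$, it already has the form required; the Lemma is invoked only if one wants to push to $\set{0,\frac13,\frac23,1}$, which is not needed here. Putting it together: $\val{\AA} \ge \lambda \iff \val{\BB_2} \ge \lambda$, and $\BB_2$ is simple with a single probabilistic transition, which is the claim.
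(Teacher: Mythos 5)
Your construction has a fatal flaw at the product step. You claim that the product $\BB_2$ of $\BB_1$ with the DFA for $R$ ``is still simple with one probabilistic transition,'' but it is not. A probabilistic transition is a \emph{couple} $(s,a)$ with $M_a(s,t)\notin\set{0,1}$; in the product, the probabilistic transitions are all couples $((s_*,c),*)$ for the various DFA-states $c$ that can co-occur with $s_*$ when a `$*$' is read. The DFA for $R$ must, at the moment it reads the `$*$' inside the block $\chck(a,q)\cdot *\cdot\apply(a,q)$, remember the pair $(a,q)$ in order to verify that the matching $\apply(a,q)$ follows; so it is in a different state $c_{a,q}$ for each such pair, and the product acquires on the order of $|A|\cdot|Q|$ probabilistic transitions. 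This is exactly the obstruction the construction is designed around: the single probabilistic state $s_*$ works precisely because it \emph{forgets} which $(a,q)$ it came from, and any deterministic component running in parallel that does not forget re-multiplies the probabilistic transition. (Your fallback observation that a whole-word deterministic checker ``does not actually restrict anything for the value'' is a correct computation of $\val{\BB_2}$, but it does not rescue the transition count.)

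The paper's route is genuinely different and is forced by this obstruction: it runs the syntax checker $\CC$ \emph{sequentially}, not in parallel. The simulation gadget is modified so that each simulated block is abandoned with probability $\frac{1}{4}$ (the mass parks in a $\wait$ state and restarts from $q_0$ at the next $\finish$), while the mass that completes a block and lands in $F$ is handed to $\CC$, which then deterministically checks that the \emph{remaining suffix} of the input is well-formed. Getting the acceptance probability above $\frac{3}{4}$ (or above $\frac{3\lambda}{4}$) therefore requires several $\finish$-delimited blocks, and once any mass has entered $\CC$ the later blocks are forced to lie in the image of $\widehat{\_}$, which is what makes the back direction ($\val{\BB}\geq\lambda\Rightarrow\val{\AA}\geq\lambda$) go through. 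The Lemma you discard is also not optional in that construction: the delay gadget splits the surviving $\frac{3}{4}$ of the mass as $\frac{1}{4}$ to $s_0$ and $\frac{1}{2}$ to $s_1$, i.e.\ a $\frac{1}{3}$--$\frac{2}{3}$ branching conditioned on not waiting, which is why $\AA$ must first be converted to have transition probabilities in $\set{0,\frac{1}{3},\frac{2}{3},1}$. You would need to replace your product step by some mechanism of this sequential kind; as written, the proposal does not establish the statement.
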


Thanks to the lemma, we assume that in $\AA$, transitions have probabilities $0$, $\frac{1}{3}$, $\frac{2}{3}$ or $1$.

We first deal with the case where $\lambda = 1$.
The new gadget used to simulate a transition is illustrated in the figure.

\begin{figure}
\begin{center}
\begin{picture}(60,28)(0,0)
	\gasset{Nadjust=wh,Nframe=n}

  	\node(q)(0,20){$q$}
  	\node(s)(40,20){$s_*$}
  	\node(s0)(40,10){$s_0$}
  	\node(s1)(40,0){$s_1$}
  	\node(r0)(0,10){$r_0$}
  	\node(r1)(0,0){$r_1$}

  	\node(w)(60,15){$w$}
  	\node(q_0)(80,15){to $q_0$}

  	\drawedge(q,s){$\chck(a,q)$}
	\drawloop(s){$*$}
  	\drawedge(s,w){}
	\drawloop(w){}
  	\drawedge(w,q_0){$\finish$}
  	\drawedge(s,s0){$*$}
  	\drawedge(s0,s1){$*$}
  	\drawedge(s0,r0){$\apply(a,q)$}
  	\drawedge(s1,r1){$\apply(a,q)$}
\label{fig:gadget2}
\end{picture}
\end{center}
\end{figure}

The automaton $\BB$ reads words of the form $u_1 \cdot \finish \cdot u_2 \cdot \finish \ldots$, where `$\finish$' is a fresh new letter.
The idea is to ``skip'', or ``delay'' part of the computation of $\AA$:
each time the automaton $\BB$ reads a word $u_i$, it will be skipped with some probability.

Simulating a transition works as follows: 
whenever in state $s_*$, reading two times the letter `$*$' leads 
with probability half to $s_1$, quarter to $s_0$ and quarter to $s$.
As before, from $s_0$ and $s_1$, we proceed with the simulation.
However, in the last case, we ``wait'' for the next letter `$\finish$' 
that will restart from $q_0$.
Thus each time a transition is simulated, the word being read is skipped 
with probability $\frac{1}{4}$.

Delaying part of the computation allows to multiply the number of threads.
We will use the accepted threads to check the extra regular condition we had before.
To this end, as soon as a simulated thread is accepted in $\BB$, 
it will go through an automaton (denoted $\CC$ in the construction) 
that checks the extra regular condition.

\begin{proof}
We keep the same notations.
The alphabet $B$ is made of three new letters: `$*$', `$\merge$' and `$\finish$' plus, 
for each letter $a \in A$ and state $q \in Q$,
two new letters $\chck(a,q)$ and $\apply(a,q)$, so that:
$$B = \set{*, \merge, \finish} \cup \bigcup_{a \in A, q \in Q} \set{\chck(a,q), \apply(a,q)}$$

We first define a syntactic automaton $\CC$.
We define a morphism $\widehat{\_} : A^* \mapsto B^*$ by its action on letters:
$$\widehat{a} =
\chck(a,q_0) \cdot * \cdot * \cdot \apply(a,q_0) \ldots \chck(a,q_{n-1}) 
\cdot * \cdot  * \cdot \apply(a,q_{n-1}) \cdot \merge.$$
Consider the regular language $L = \set{\widehat{w} \cdot \finish \mid w \in A^*}^*$, 
and $\CC = (Q_\CC,\delta_\CC,s_\CC,F_\CC)$ an automaton recognizing it.

We now define the automaton $\BB$.
We duplicate each state $q \in Q$, and denote the fresh copy by $\bar{q}$.
States in $\BB$ are either a state from $Q$ or its copy, 
a state from $Q_\CC$ or one of the four fresh states $s_*$, $s_0$, $s_1$ and $\wait$.

The initial state remains $q_0$, and the set of final states is $F_\CC$.

The transitions of $\BB$ are as follows:
\begin{itemize}
	\item for every letter $a \in A$ and state $q \in Q$, 
the new letter $\chck(a,q)$ from state $q$ leads deterministically to state $s_*$ \textit{i.e} $M_{\chck(a,q)}(q) = s_*$,
	\item the new letter $*$ from state $s_*$ leads with probability half to $s_*$ and half to $s_0$, \textit{i.e} $M_{s_*}(*) = \frac{1}{2} s_* + \frac{1}{2} s_0$ (this is the only probabilistic transition of $\BB$);
	\item any other letter from state $s_*$ leads deterministically to $w$,
\textit{i.e} $M_{s_*}(\_) = \wait$;
	\item the new letter $*$ from state $s_0$ leads deterministically to $s_1$, \textit{i.e} $M_{s_0}(*) = s_1$;
	\item the new letter $\apply(a,q)$ from states $s_0$ and $s_1$ applies 
the transition function from $q$ reading $a$: 
if the transition $M_a(q)$ is deterministic, 
\textit{i.e} $M_a(q,r) = 1$ for some state $r$ then $M_{\apply(a,q)}(s_0) = \bar{r}$ 
and $M_{\apply(a,q)}(s_1) = \bar{r}$,
else the transition $M_a(q)$ is probabilistic \textit{i.e}
$M_a(q) = \frac{1}{2} r + \frac{1}{2} r'$ for some states $r,r'$, 
then $M_{\apply(a,q)}(s_0) = \bar{r}$ and $M_{\apply(a,q)}(s_1) = \bar{r'}$;
	\item the new letter $\merge$ activates the merging process: 
it consists in replacing $\bar{q}$ by $q$ for all $q \in Q$;
	\item the new letter $\finish$ from state $\wait$ leads deterministically to $q_0$;
	\item the new letter $\finish$ from state $q$ in $F$ leads deterministically to $s_\CC$;
	\item the new letter $\finish$ from any other state is not defined (there is a deterministic transition to a bottom non-accepting state).
\end{itemize}
Transitions in $\CC$ are not modified.
Whenever a couple (letter, state) does not fall in the previous cases, it has no effect.

We now show that this construction is correct.

We first prove that for all $w \in A^*$, there exists a sequence of words $(w_p)_{p \geq 1}$ such that $\prob{\AA}(w) = \sup_p \prob{\BB}(w_p)$.

We have, for $\delta$ a distribution over $Q$:
$$\delta_\BB (\delta,\hat{a}) = \frac{3}{4} \delta_\AA (\delta,a) + \frac{1}{4} \wait.$$
It follows:
$$\delta_\BB (\delta,\hat{w}) = \left(\frac{3}{4}\right)^k \delta_\AA (\delta,w) 
+ 1 - \left(\frac{3}{4}\right)^k \wait,$$
where $k = |w|$.
Hence:
$$\delta_\BB (q_0,\hat{w} \cdot \finish) = 
\left(\frac{3}{4}\right)^k \prob{\AA}(w)
+ 1 - \left(\frac{3}{4}\right)^k q_0.$$

The computation of $\AA$ while reading $w$ is simulated by $\BB$ 
on $\widehat{w} \cdot \finish$. 
This implies that $\sup_p \prob{\BB}((\widehat{w} \cdot \finish)^p) = \prob{\AA}(w)$,
hence if $\val{\AA} = 1$, then $\val{\BB} = 1$.

Conversely, we prove that if $\val{\BB} = 1$, then $\val{\AA} = 1$.
Let $w$ a word read by $\BB$ accepted with probability close to $1$, we slice it as follows: 
$w = u_1 \cdot \finish \cdot \ldots \cdot u_k \cdot \finish$, 
such that $u_i$ does not contain the letter $\finish$.
The key observation is that if $k = 1$, 
the word $w$ is accepted with probability at most $\frac{3}{4}$. 
Hence we consider only the case $k > 1$.
We assume without loss of generality that $\prob{\AA}(u_1) > 0$ 
(otherwise we delete $u_1 \cdot \finish$ and proceed).
In this case, a thread has been thrown while reading $u_1$ that reached $s_\CC$, 
so the syntactic process started: 
it follows that $u_i$ for $i > 1$ are in the image of $\widehat{\_}$.
This implies that the simulation is sound: 
from $w$ we can recover a word in $A^*$ accepted 
with probability arbitrarily close to $1$ by $\AA$.

The case where $\lambda$ is any positive rational is handled similarly.
We only need to ensure that the previous key observation still holds:
a word of the form $u \cdot \finish$ where $u$ does not contain $\finish$ cannot be accepted with probability more than $\frac{3\lambda}{4}$.
This is made possible by slightly modifying the simulation gadget, adding new intermediate states.

This completes the proof.\hfill\qed
\end{proof}

We conclude:

\begin{theorem}
The isolation problem is undecidable for simple automata with one probabilistic transition.
\end{theorem}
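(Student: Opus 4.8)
The plan is to obtain the theorem by a single effective reduction: plug the simulation of the last Proposition into a known undecidability result. No new combinatorics are needed beyond what has already been established; concretely, I would reduce the value-$1$ problem for simple probabilistic automata — the question whether $\val{\AA}=1$ — to the isolation problem for simple automata with one probabilistic transition.

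\emph{Step 1: the source of undecidability.} It is undecidable whether a given simple probabilistic automaton $\AA$ satisfies $\val{\AA}=1$. This is the value-$1$ problem; it can also be obtained from Bertoni's theorem, quoted in the introduction, together with a routine value-preserving transformation turning an arbitrary probabilistic automaton into a simple one, in the spirit of the Lemma above. \emph{Step 2: isolation at the cut-point $1$ is the value-$1$ problem.} I would record the elementary equivalence: for any probabilistic automaton $\AA$, the point $1$ is isolated if and only if $\val{\AA}<1$. Indeed, if $\val{\AA}=1$ then there are words accepted with probability arbitrarily close to $1$, so $1$ is not isolated; conversely, if $\val{\AA}=c<1$ then every acceptance probability lies in $[0,c]$, hence at distance at least $1-c>0$ from $1$, so $1$ is isolated. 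Consequently the isolation problem with the cut-point fixed to $\lambda=1$ is exactly the complement of the value-$1$ problem, hence undecidable for simple automata by Step~1.

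\emph{Step 3: push it down to one probabilistic transition.} I would apply the last Proposition with $\lambda=1$: from a simple automaton $\AA$ one effectively constructs a simple automaton $\BB$ with a single probabilistic transition such that $\val{\AA}\geq 1 \iff \val{\BB}\geq 1$, that is, $\val{\AA}=1 \iff \val{\BB}=1$. Chaining this with Step~2 applied on both sides gives: $1$ is isolated for $\AA$ $\iff$ $\val{\AA}<1$ $\iff$ $\val{\BB}<1$ $\iff$ $1$ is isolated for $\BB$. Since the map $\AA\mapsto\BB$ is computable, any algorithm deciding the isolation problem on the class of simple automata with one probabilistic transition would decide the value-$1$ problem on arbitrary simple automata — impossible. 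Hence the isolation problem is undecidable on simple automata with one probabilistic transition, already for the fixed cut-point $\lambda=1$, and therefore a fortiori when $\lambda$ is part of the input, as in the statement.

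\emph{Where the difficulty lies.} There is no real obstacle left: all the work is concentrated in the simulation of the last Proposition, which is already done. The only points deserving a line of care are the trivial equivalence of Step~2 and the need, in Step~1, that the source problem be about \emph{simple} automata — this is why one invokes the value-$1$ problem (or the folklore reduction to simple automata) rather than Bertoni's theorem verbatim, since the last Proposition is stated for simple inputs. If one insists on the statement for an arbitrary rational cut-point $\lambda$, one replaces Steps~1--2 by Bertoni's theorem and uses the finer information already extracted in the proof of the last Proposition — completeness, $\sup_p \prob{\BB}((\widehat{w}\cdot\finish)^p)=\prob{\AA}(w)$, so every acceptance probability of $\AA$ is a limit of acceptance probabilities of $\BB$; and soundness, an ill-formed word of $\BB$ is accepted with probability at most $\frac{3\lambda}{4}<\lambda$, so words of $\BB$ with acceptance probability near $\lambda$ are essentially well-formed and translate back to words of $\AA$ with acceptance probability near $\lambda$ — which together show that $\lambda$ is isolated for $\AA$ if and only if it is isolated for $\BB$.
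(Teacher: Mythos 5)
Your proposal is correct and matches what the paper does: the theorem is obtained purely by chaining the last Proposition with a known undecidability result for unrestricted (simple) probabilistic automata, and the paper indeed leaves this final step implicit (``We conclude''). Your Steps 2--3 (isolation at $\lambda=1$ is exactly $\val{\AA}<1$, and the Proposition at $\lambda=1$ preserves value~$1$) are right, and your closing remark correctly identifies that for a general rational cut-point the Proposition's \emph{statement} ($\val{\AA}\geq\lambda\Leftrightarrow\val{\BB}\geq\lambda$) is too weak for isolation, so one must invoke the finer completeness/soundness facts established inside its proof --- that is precisely the route the paper needs, since its only cited source of undecidability is Bertoni's theorem, which concerns cut-points in the open interval $(0,1)$.

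One side-claim in your Step 1 is wrong and worth correcting: the undecidability of the value-$1$ problem does \emph{not} follow from Bertoni's theorem ``together with a routine value-preserving transformation''. The natural transformations (e.g.\ rescaling acceptance probabilities by an initial coin flip) reduce the value-$1$ problem \emph{to} isolation at an interior cut-point, not the other way around; historically the value-$1$ problem remained open after Bertoni and was settled separately (by two of this paper's authors). So if you take the $\lambda=1$ route you must cite the value-$1$ undecidability result as an independent theorem, not derive it from Bertoni. Since you also give the Bertoni-based route for interior $\lambda$, the overall argument stands either way.
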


\bibliographystyle{alpha}
\bibliography{bib}
\end{document}